\documentclass[11pt]{article}
\usepackage[left=1in,top=1in,right=1in,bottom=1in]{geometry}
\usepackage{times}

\usepackage{url}
\usepackage{verbatim}
\usepackage{amsmath}
\usepackage{amssymb}
\usepackage{amsthm}
\usepackage{rotating}
\usepackage{algorithm}
\usepackage{algorithmic}

\usepackage{tabularx}
\usepackage{subfigure}
\usepackage{multirow}
\usepackage{siunitx}
\usepackage{makecell}

\newtheorem{definition}{Definition}

\newtheorem{proposition}{Proposition}

\DeclareMathOperator*{\argmax}{arg\,max}
\DeclareMathOperator*{\argmin}{arg\,min}

\allowdisplaybreaks

\usepackage{silence}
\begin{document}

\title{Evolutionarily Stable Stackelberg Equilibrium\thanks{Accepted for presentation at the 65th IEEE Conference on Decision and Control (CDC 2026).}}

\author{Sam Ganzfried\\
Ganzfried Research, Cornell University\\
sam.ganzfried@gmail.com
}

\date{\vspace{-5ex}}

\maketitle

\begin{abstract}
We present a new solution concept called evolutionarily stable Stackelberg equilibrium (SESS). We study the Stackelberg evolutionary game setting in which there is a single leading player and a symmetric population of followers. The leader selects an optimal mixed strategy, anticipating that the follower population plays an evolutionarily stable strategy (ESS) in the induced subgame and may satisfy additional ecological conditions. We consider both leader-optimal and leader-pessimal selection among ESSs, which arise as special cases of our framework. Prior approaches to Stackelberg evolutionary games either define the follower response via evolutionary dynamics or assume rational best-response behavior, without explicitly enforcing stability against invasion by mutations. We present algorithms for computing SESS in discrete and continuous games, and validate the latter empirically. Our model applies naturally to biological settings; for example, in cancer treatment the leader represents the physician and the followers correspond to competing cancer cell phenotypes.
\end{abstract}

\section{Introduction}
\label{se:intro}
While Nash equilibrium has emerged as the standard solution concept in game theory, it is often criticized as being too weak: often games contain multiple Nash equilibria (sometimes even infinitely many), and we want to select one that satisfies other natural properties. For example, one popular concept that refines Nash equilibrium is evolutionarily stable strategy (ESS). A mixed strategy in a two-player symmetric game is an evolutionarily stable strategy if it is robust to being overtaken by a mutation strategy. Formally, $\mathbf{x}^*$ is an ESS if for every mixed strategy $\mathbf{x}$ that differs from $\mathbf{x}^*$, there exists $\epsilon_0 = \epsilon_0(\mathbf{x}) > 0$ such that, for all $\epsilon \in (0,\epsilon_0)$, 
\begin{equation*}
(1-\epsilon)u_1(\mathbf{x},\mathbf{x}^*) + \epsilon u_1(\mathbf{x},\mathbf{x}) <  (1-\epsilon)u_1(\mathbf{x}^*,\mathbf{x}^*)+\epsilon u_1(\mathbf{x}^*,\mathbf{x}).
\end{equation*}
From a biological perspective, we can interpret $\mathbf{x}^*$ as a distribution among ``normal'' individuals within a population, and consider a mutation that makes use of strategy $\mathbf{x}$, assuming that the proportion of the mutation in the population is $\epsilon$. In an ESS, a sufficiently rare mutant earns a lower expected payoff than the resident strategy and therefore cannot successfully invade the population. Thus, an ESS is a mixed strategy that is robust to invasion by mutations. ESS was initially proposed by mathematical biologists motivated by applications such as population dynamics (e.g., maintaining robustness to mutations within a population of humans or animals)~\cite{Maynard73:Logic,Maynard82:Evolution}. A common example game is the 2x2 game where strategies correspond to an ``aggressive'' Hawk or a ``peaceful'' Dove strategy. A paper has recently proposed a similar game in which an aggressive malignant cell competes with a passive normal cell for biological energy, which has applications to cancer eradication~\cite{Dingli09:Cancer}. 

While Nash equilibrium is defined for general multiplayer games, ESS is traditionally defined specifically for two-player symmetric games. ESS is a refinement of Nash equilibrium. In particular, if $\mathbf{x}^*$ is an ESS, then $(\mathbf{x}^*,\mathbf{x}^*)$ (i.e., the strategy profile where both players play $\mathbf{x}^*$) is a (symmetric) Nash equilibrium~\cite{Maschler13:Game}. Of course the converse is not necessarily true (not every symmetric Nash equilibrium is an ESS), or else ESS would be a trivial refinement. In fact, ESS is not guaranteed to exist in games with more than two pure strategies per player (while Nash equilibrium is guaranteed to exist in all finite games). For example, while rock-paper-scissors has a mixed strategy Nash equilibrium (which puts equal weight on all three actions), it has no ESS~\cite{Maschler13:Game}.

There exists a polynomial-time algorithm for computing Nash equilibrium (NE) in two-player zero-sum games, while for two-player non-zero-sum and multiplayer games computing an NE is PPAD-complete and it is widely conjectured that no efficient (polynomial-time) algorithm exists. However, several algorithms have been devised that perform well in practice. The problem of computing whether a game has an ESS was shown to be both NP-hard and coNP-hard and also to be contained in $\Sigma ^P _2$ (the class of decision problems that can be solved in nondeterministic polynomial time given access to an NP oracle)~\cite{Etessami08:Computational}. Subsequently it was shown that the exact complexity of this problem is that it is $\Sigma^P _2$-complete~\cite{Conitzer13:Exact}, and even more challenging for more than two players~\cite{Blanc21:Computational}. Note that this result is for determining whether an ESS exists (as discussed above there exist games which have no ESS), not for the complexity of computing an ESS in games for which one exists. Thus, computing an ESS is significantly more difficult than computing an NE. Several approaches have been proposed for computing ESS in two-player games~\cite{Haigh75:Game,Abakuks80:Conditions,Broom13:Game-Theoretical,Bomze92:Detecting,Mcnamara97:General}.

\section{Discrete Stackelberg--ESS}
\label{se:discrete-sess}
A \emph{normal-form game} consists of a finite set of players $N = \{1,\ldots,n\}$, a finite set of pure strategies $S_i$ for each player $i$, and a real-valued utility for each player for each strategy vector (aka \emph{strategy profile}), $u_i : \times_i S_i \rightarrow \mathbb{R}$. In a \emph{symmetric normal-form game}, all strategy spaces $S_i$ are equal and the utility functions satisfy the following symmetry condition: for every player $i \in N$, pure strategy profile $(s_1,\ldots,s_n) \in S^n$, and permutation $\pi$ of the players,
$$u_i(s_1,\ldots,s_n) = u_{\pi(i)}(s_{\pi(1)},\ldots,s_{\pi(n)}).$$
This allows us to remove the player index of the utility function and just write 
$u(s_1,\ldots,s_n),$ where it is implied that the utility is for player 1 (we can simply permute the players to obtain the utilities of the other players). We write $u_i$ for notational convenience, but note that only a single utility function must be specified which applies to all players.

Let $\Sigma_i$ denote the set of mixed strategies of player $i$ (probability distributions over elements of $S_i$). If players follow mixed strategy profile
$\mathbf{x} = (\mathbf{x}^{(1)},\ldots,\mathbf{x}^{(n)}),$ where $\mathbf{x}^{(i)} \in \Sigma_i$, expected payoff to player $i$ is
$$u_i(\mathbf{x}^{(1)},\ldots,\mathbf{x}^{(n)}) = \sum_{s_1,\ldots,s_n \in S} x^{(1)}_{s_1} \cdots x^{(n)}_{s_n}u_i(s_1,\ldots,s_n).$$ 
We write $u_i(\mathbf{x}) = u_i(\mathbf{x}^{(i)},\mathbf{x}^{(-i)})$,
where $\mathbf{x}^{(-i)}$ denotes the vector of strategies of all players except $i$.
If all players follow the same mixed strategy $\mathbf{x}$, then for all players $i$:
$$u_i(\mathbf{x}) = u_i(\mathbf{x},\ldots,\mathbf{x}) = \sum_{s_1,\ldots,s_n \in S} x_{s_1} \cdots x_{s_n} u_i(s_1,\ldots,s_n).$$ 

\begin{definition}
\label{de:ne}
A mixed strategy profile $\mathbf{x}^\star$ is a Nash equilibrium if for each player $i \in N$ and for each mixed strategy $\mathbf{x}^{(i)} \in \Sigma_i$:
$u_i(\mathbf{x}^{*(i)},\mathbf{x}^{*(-i)}) \geq u_i(\mathbf{x}^{(i)},\mathbf{x}^{*(-i)}).$
\end{definition}

\begin{definition}
\label{de:ne-sym}
A mixed strategy profile $\mathbf{x}^\star$ in a symmetric normal-form game is a symmetric Nash equilibrium if it is a Nash equilibrium and:
$\mathbf{x}^{*(1)} = \mathbf{x}^{*(2)} = \cdots = \mathbf{x}^{*(n)}.$
\end{definition}

\begin{definition}
\label{de:ess}
A mixed strategy $\mathbf{x}^\star \in \Sigma_1$ is evolutionarily stable in a two-player symmetric normal-form game if for each mixed strategy $\mathbf{x} \neq \mathbf{x}^\star$ exactly one of the following holds:
\begin{enumerate}
    \item $u_1(\mathbf{x}^*,\mathbf{x}^*) > u_1(\mathbf{x},\mathbf{x}^*),$
    \item $u_1(\mathbf{x}^*,\mathbf{x}^*) = u_1(\mathbf{x},\mathbf{x}^*)$ and $u_1(\mathbf{x}^*,\mathbf{x}) > u_1(\mathbf{x},\mathbf{x}).$
\end{enumerate}
\end{definition}

It has been proven that every symmetric normal-form game has at least one symmetric Nash equilibrium~\cite{Nash51:Non}.
It is clear from Definition~\ref{de:ess} that every evolutionarily stable strategy in a two-player symmetric normal-form game must be a symmetric Nash equilibrium (SNE).
Thus, a natural approach for ESS computation is to first compute SNE and then perform subsequent procedures to determine whether they are ESS. 

Now we present the Stackelberg evolutionary game setting. The leader's set of pure strategies is $\mathcal{M} = \{1,\ldots,m\}$ and the follower phenotypes are $\mathcal{P} = \{1,\ldots,n\}.$ The leader's utility function is given by $\mathbf{A_L} \in \mathbb{R}^{m \times n}$, where $\mathbf{A_L}(\ell,i)$ gives the payoff to the leader when the leader plays pure strategy $\ell \in \mathcal{M}$ and the follower population state is phenotype $i \in \mathcal{P}.$ The leader's expected utility when the leader plays mixed strategy $\boldsymbol\sigma$ and the follower plays mixed strategy $\mathbf{x}$ is $U_L(\boldsymbol\sigma,\mathbf{x}) = \sum_{\ell = 1}^m \sum_{i=1}^n \sigma_{\ell} x_i \mathbf{A_L}(\ell,i).$ The follower payoff tensor is $\mathbf{A_F} \in \mathbb{R}^{m \times n \times n},$ where $\mathbf{A_F}(\ell, i, j)$ is the payoff (fitness) to a follower of phenotype $i \in \mathcal{P}$ when the leader plays pure strategy $\ell \in \mathcal{M}$ and the interacting follower has phenotype $j \in \mathcal{P}.$ For each fixed $\ell \in \mathcal M$, $\mathbf{A_F}(\ell,\cdot,\cdot)$ is the $n \times n$ payoff matrix of the induced symmetric follower game. After the leader commits to mixed strategy $\boldsymbol\sigma \in \Delta(\mathcal{M})$, the induced follower payoff matrix is $\mathbf{B}^{\boldsymbol\sigma} \in \mathbb{R}^{n \times n}$ with $\mathbf{B}^{\boldsymbol\sigma} = \sum_{\ell = 1}^m \sigma_{\ell} \mathbf{A_F}(\ell,\cdot,\cdot).$ The follower population then plays a symmetric evolutionary game with payoff matrix $\mathbf{B}^{\boldsymbol\sigma}.$ Let $\mathcal N(\boldsymbol\sigma)$ denote the set of symmetric Nash equilibria of $\mathbf{B}^{\boldsymbol\sigma}.$ Given a selection correspondence $\mathcal H(\boldsymbol\sigma) \subseteq \mathcal N(\boldsymbol\sigma)$, a pair $(\boldsymbol\sigma^*,\mathbf{x}^*)$ is a Stackelberg equilibrium if
\[
\boldsymbol\sigma^* \in \argmax_{\boldsymbol\sigma\in\Delta(\mathcal{M})}
\;\max_{\mathbf{x}\in \mathcal H(\boldsymbol\sigma)} U_L(\boldsymbol\sigma,\mathbf{x})
\quad\text{and}\quad
\mathbf{x}^* \in \mathcal H(\boldsymbol\sigma^*).
\]
A population state $\mathbf{x} \in  \Delta(\mathcal{P})$ is admissible for $\boldsymbol\sigma$ iff it is an ESS of $\mathbf{B}^{\boldsymbol\sigma}.$ Let $\mathcal E(\boldsymbol\sigma)$ denote the set of ESSs of $\mathbf{B}^{\boldsymbol\sigma}.$ We assume that $\mathcal E(\boldsymbol\sigma) \neq \emptyset$ for all $\boldsymbol\sigma \in \Delta(\mathcal M),$ so that the follower response correspondence is well-defined. Let \(g^\sigma : \Delta(\mathcal P) \to \mathbb R\) denote an \emph{ESS selection function}. Finally let $\mathcal G(\boldsymbol\sigma) = \argmax_{\mathbf{x} \in \mathcal E(\boldsymbol\sigma)} g^{\boldsymbol\sigma}(\mathbf{x}).$ The leader maximizes utility anticipating that the follower outcome is selected from the correspondence $\mathcal G(\boldsymbol\sigma).$

We now present our general definition for \emph{evolutionarily stable Stackelberg equilibrium} (SESS) in Definition~\ref{def:SESS}. We also define two natural special cases. In the first case, the followers play the ESS that is most beneficial to the leader. We refer to this as an \emph{optimistic evolutionarily stable Stackelberg equilibrium} (OSESS). In this case we have $g^{\boldsymbol\sigma}(\mathbf{x}) = U_L(\boldsymbol\sigma, \mathbf{x}).$ We also consider \emph{pessimistic evolutionarily stable Stackelberg equilibrium} (PSESS), in which the followers play the ESS that is worst for the leader. While the followers are not rationally trying to harm the leader, we can view PSESS as SESS that is robust to worst-case evolutionary stability. For PSESS we have $g^{\boldsymbol\sigma}(\mathbf{x}) = -U_L(\boldsymbol\sigma, \mathbf{x}).$ OSESS and PSESS are defined below in Definitions~\ref{def:OSESS}--\ref{def:PSESS}. For OSESS we have $\mathcal G(\boldsymbol\sigma) = \argmax_{\mathbf{x} \in \mathcal E(\boldsymbol\sigma)} U_L(\boldsymbol\sigma, \mathbf{x})$, and for PSESS $\mathcal G(\boldsymbol\sigma) = \argmin_{\mathbf{x} \in \mathcal E(\boldsymbol\sigma)} U_L(\boldsymbol\sigma, \mathbf{x}).$ It is clear that OSESS and PSESS are special cases of SESS (since they specify a particular function $g^{\boldsymbol\sigma}(\mathbf{x})$). Relative to standard Stackelberg equilibrium with follower responses selected from symmetric Nash equilibria, SESS is a refinement that restricts admissible follower responses to evolutionarily stable strategies in the induced population game; Proposition~\ref{pr:SESS-discrete} shows that every SESS is a Stackelberg equilibrium once the same selection correspondence is fixed.

\begin{definition}[Evolutionarily Stable Stackelberg Equilibrium]
\label{def:SESS}
Given a selection correspondence $\mathcal G(\boldsymbol\sigma) \subseteq \mathcal E(\boldsymbol\sigma)$, $(\boldsymbol\sigma^*,\mathbf{x}^*)$ is an evolutionarily stable Stackelberg equilibrium if
\[
\boldsymbol\sigma^* \in \argmax_{\boldsymbol\sigma\in\Delta(\mathcal M)}
\;\max_{\mathbf{x}\in \mathcal G(\boldsymbol\sigma)}
U_L(\boldsymbol\sigma,\mathbf{x})
\quad\text{and}\quad
\mathbf{x}^* \in \mathcal G(\boldsymbol\sigma^*).
\]
\end{definition}

\begin{definition}[Optimistic SESS]
\label{def:OSESS}
A pair $(\boldsymbol\sigma^*, \mathbf{x}^*) \in \Delta(\mathcal{M}) \times \Delta(\mathcal{P})$ is an
\emph{optimistic evolutionarily stable Stackelberg equilibrium} if
\[
(\boldsymbol\sigma^*,\mathbf{x}^*) \in 
\argmax_{\boldsymbol\sigma\in\Delta(\mathcal M),\, \mathbf{x}\in\mathcal E(\boldsymbol\sigma)}
U_L(\boldsymbol\sigma,\mathbf{x}).
\]
\end{definition}

\begin{definition}[Pessimistic SESS]
\label{def:PSESS}
A pair $(\boldsymbol\sigma^*, \mathbf{x}^*) \in \Delta(\mathcal{M})\times \Delta(\mathcal{P})$ is a
\emph{pessimistic evolutionarily stable Stackelberg equilibrium} if
\[
\boldsymbol\sigma^* \in \argmax_{\boldsymbol\sigma\in\Delta(\mathcal M)}
\;\min_{\mathbf{x}\in\mathcal E(\boldsymbol\sigma)}
U_L(\boldsymbol\sigma,\mathbf{x})
\text{ and }
\mathbf{x}^* \in 
\argmin_{\mathbf{x} \in \mathcal E(\boldsymbol\sigma^*)} U_L(\boldsymbol\sigma^*,\mathbf{x}).
\]
\end{definition}

\begin{proposition}
Every evolutionarily stable Stackelberg equilibrium $(\boldsymbol\sigma^*,\mathbf{x}^*)$ under selection correspondence $\mathcal G$ is a Stackelberg equilibrium under the same selection correspondence $\mathcal G$.
\label{pr:SESS-discrete}
\end{proposition}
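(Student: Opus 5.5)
The plan is to unfold the two definitions and check that they coincide once the selection correspondence is fixed, so nothing beyond bookkeeping is needed. The Stackelberg equilibrium definition requires a selection correspondence $\mathcal H(\boldsymbol\sigma)\subseteq\mathcal N(\boldsymbol\sigma)$. So the first step is to verify that $\mathcal G(\boldsymbol\sigma)$ is an admissible choice of $\mathcal H$, i.e.\ that $\mathcal G(\boldsymbol\sigma)\subseteq\mathcal N(\boldsymbol\sigma)$ for every $\boldsymbol\sigma\in\Delta(\mathcal M)$.

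For this, note that $\mathcal G(\boldsymbol\sigma)\subseteq\mathcal E(\boldsymbol\sigma)$ by construction. It therefore suffices to show that every ESS of $\mathbf{B}^{\boldsymbol\sigma}$ is a symmetric Nash equilibrium of $\mathbf{B}^{\boldsymbol\sigma}$, which is the observation already recorded after Definition~\ref{de:ess}.

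To make this explicit, take $\mathbf{x}^*\in\mathcal E(\boldsymbol\sigma)$ and any $\mathbf{x}\neq\mathbf{x}^*$. By Definition~\ref{de:ess}, exactly one of the two listed conditions holds. In either case $u_1(\mathbf{x}^*,\mathbf{x}^*)\ge u_1(\mathbf{x},\mathbf{x}^*)$, where $u_1(\mathbf{y},\mathbf{z})=\mathbf{y}^\top\mathbf{B}^{\boldsymbol\sigma}\mathbf{z}$. The inequality holds trivially when $\mathbf{x}=\mathbf{x}^*$. By symmetry of the induced game, the same inequality holds for player 2. Hence $(\mathbf{x}^*,\mathbf{x}^*)$ satisfies Definitions~\ref{de:ne} and~\ref{de:ne-sym}, so $\mathbf{x}^*\in\mathcal N(\boldsymbol\sigma)$.

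With $\mathcal H:=\mathcal G$ now a legitimate selection correspondence, the second step is to compare the two defining conditions directly. The SESS condition in Definition~\ref{def:SESS} reads
\[
\boldsymbol\sigma^*\in\argmax_{\boldsymbol\sigma}\max_{\mathbf{x}\in\mathcal G(\boldsymbol\sigma)}U_L(\boldsymbol\sigma,\mathbf{x}),\qquad \mathbf{x}^*\in\mathcal G(\boldsymbol\sigma^*).
\]
This is verbatim the Stackelberg condition with $\mathcal H=\mathcal G$, so $(\boldsymbol\sigma^*,\mathbf{x}^*)$ is a Stackelberg equilibrium.

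There is no real obstacle here. The only point needing care is the admissibility step, namely $\mathcal G\subseteq\mathcal N$, and this rests on ESS implying symmetric NE. Well-definedness of the inner max is part of the hypothesis that $(\boldsymbol\sigma^*,\mathbf{x}^*)$ is an SESS, since the paper assumes $\mathcal E(\boldsymbol\sigma)\neq\emptyset$. If $\mathcal G(\boldsymbol\sigma)$ were empty for some $\boldsymbol\sigma$, the max would be $-\infty$ under both definitions, so the argmax sets would still agree.
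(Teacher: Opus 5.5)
Your proposal is correct and follows the paper's proof: both establish $\mathcal G(\boldsymbol\sigma)\subseteq\mathcal E(\boldsymbol\sigma)\subseteq\mathcal N(\boldsymbol\sigma)$, so that $\mathcal G$ is a valid Stackelberg selection correspondence, and then note that the two defining conditions coincide verbatim. Your explicit check that an ESS satisfies $u_1(\mathbf{x}^*,\mathbf{x}^*)\ge u_1(\mathbf{x},\mathbf{x}^*)$ for every $\mathbf{x}$ fills in the ESS-implies-symmetric-NE step, which the paper simply cites as known.
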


\begin{proof}
Suppose that $(\boldsymbol\sigma^*,\mathbf{x}^*)$ is an evolutionarily stable Stackelberg equilibrium under selection correspondence $\mathcal G.$ By Definition 4,
\[
\boldsymbol\sigma^* \in \argmax_{\boldsymbol\sigma\in\Delta(\mathcal{M})}
\;\max_{\mathbf{x}\in \mathcal G(\boldsymbol\sigma)}
U_L(\boldsymbol\sigma,\mathbf{x})
\]
and
\[
\mathbf{x}^* \in \mathcal G(\boldsymbol\sigma^*).
\]
Since every ESS is a symmetric Nash equilibrium,
\[
\mathcal G(\boldsymbol\sigma)
\subseteq \mathcal E(\boldsymbol\sigma)
\subseteq \mathcal N(\boldsymbol\sigma)
\]
for every $\boldsymbol\sigma.$ Thus, $\mathcal G$ is also a valid Stackelberg selection correspondence. The two conditions above are exactly the conditions for a Stackelberg equilibrium under selection correspondence $\mathcal G.$
\end{proof}

Relative to standard Stackelberg equilibrium with follower responses selected from the set of symmetric Nash equilibria, SESS is a strict refinement. In particular, there can exist a Stackelberg equilibrium whose selected follower response is a symmetric Nash equilibrium but not an ESS. Consider a Stackelberg evolutionary game with a single leader action and two follower phenotypes. The leader payoff is
\[
\mathbf{A_L} =
\begin{bmatrix}
1 & 0
\end{bmatrix},
\]
and the induced follower payoff matrix is
\[
\mathbf{B} =
\begin{bmatrix}
0 & 0 \\
0 & 1
\end{bmatrix}.
\]
The follower population state \(\boldsymbol{e}_1\) is a symmetric Nash equilibrium, since both follower phenotypes obtain payoff \(0\) against \(\boldsymbol{e}_1\). However, \(\boldsymbol{e}_1\) is not an ESS: the mutant \(\boldsymbol{e}_2\) ties against the resident \(\boldsymbol{e}_1\), since
\[
\boldsymbol{e}_2^\top \mathbf{B}\boldsymbol{e}_1
=
\boldsymbol{e}_1^\top \mathbf{B}\boldsymbol{e}_1
=
0,
\]
but does better against itself than the resident does against it,
\[
\boldsymbol{e}_2^\top \mathbf{B}\boldsymbol{e}_2
=
1
>
0
=
\boldsymbol{e}_1^\top \mathbf{B}\boldsymbol{e}_2.
\]
By contrast, \(\boldsymbol{e}_2\) is an ESS because it is a strict symmetric Nash equilibrium. Since the leader has only one action, the leader's strategy is fixed. However, under optimistic Stackelberg selection over symmetric Nash equilibria, the selected follower response is \(\boldsymbol{e}_1\), because it gives the leader payoff \(1\), while \(\boldsymbol{e}_2\) gives the leader payoff \(0\). Under optimistic SESS, the only admissible ESS response is \(\boldsymbol{e}_2\), giving leader payoff \(0\). Thus an optimistic Stackelberg equilibrium need not be an optimistic SESS. This example illustrates that the distinction between Stackelberg equilibrium and SESS is substantive, even though in the continuous cancer-treatment experiment considered below the computed Stackelberg equilibrium happens also to satisfy the evolutionary-stability conditions.

\section{Continuous Stackelberg--ESS}
We now extend the Stackelberg--ESS framework to continuous Stackelberg evolutionary games,
which arise naturally in biological and control applications such as cancer treatment.
In this setting, the leader selects a continuous decision variable, while the followers
correspond to an evolving population whose response is described by coupled ecological
and evolutionary dynamics.

Let $M \subseteq \mathbb R^d_+$ denote the leader decision set (e.g., treatment doses or control parameters). For a given leader decision $\boldsymbol{m} \in M,$ the followers are described by a collection of continuous state variables, including population sizes and traits. We denote the follower outcome by $\boldsymbol{z}=(\boldsymbol{x},\boldsymbol{u})\in Z,$ where $\boldsymbol{x}\in\mathbb R^n_+$ represents population abundances and $\boldsymbol{u}\in U\subseteq \mathbb R^n$ represents trait or strategy variables. As prior work has done we will assume that $U=[0,1]^n.$ In the context of cancer treatment $d$ denotes the number of drugs and $n$ denotes the number of cancer cell phenotypes.
The leader objective is given by a function
$Q : \mathcal{M} \times \mathcal{Z} \to \mathbb{R}$.
For each $\mathbf{m} \in \mathcal{M}$, the leader induces an eco--evolutionary system governing
the dynamics of $(\mathbf{x},\mathbf{u})$.
Let $\mathcal N(\mathbf{m}) \subseteq \mathcal Z$ denote the set of admissible follower responses under the standard continuous Stackelberg model. 
Given a selection correspondence $\mathcal H(m) \subseteq \mathcal N(m)$, a pair $(m^*,\mathbf{z}^*)$ is a continuous Stackelberg equilibrium if
\[
m^* \in \argmax_{m\in\mathcal M}
\;\max_{\mathbf{z}\in\mathcal H(m)} Q(m,\mathbf{z})
\quad\text{and}\quad
\mathbf{z}^* \in \mathcal H(m^*).
\]

For a fixed leader decision $\mathbf{m}$, we define the set of admissible follower responses
$\mathcal{E}(\mathbf{m}) \subseteq \mathcal{Z}$ as the set of \emph{eco--evolutionarily stable}
outcomes under $\mathbf{m}$. Informally, a follower outcome $\boldsymbol{z}^*=(\boldsymbol{x}^*,\boldsymbol{u}^*)$ is admissible if it satisfies any required ecological equilibrium conditions and is stable against invasion by rare mutants with alternative trait values in the induced eco--evolutionary system.
When multiple evolutionarily stable outcomes exist for the same leader decision, we model their selection via a correspondence $\mathcal G(\mathbf{m}) \subseteq \mathcal E(\mathbf{m}).$ We assume that
\[
\emptyset \neq \mathcal E(\mathbf{m}) \subseteq \mathcal N(\mathbf{m})
\]
for all $\mathbf{m} \in \mathcal M,$ so that the follower response correspondence is well-defined and every eco--evolutionarily stable outcome is an admissible follower response under the standard Stackelberg model.

We now define continuous analogues of Stackelberg--ESS. It is again clear that in the continuous setting OSESS and PSESS are special cases of SESS. Relative to standard continuous Stackelberg equilibrium, continuous SESS is a refinement that restricts admissible follower responses to evolutionarily stable outcomes; Proposition~\ref{pr:SESS-continuous} shows that every continuous SESS is a continuous Stackelberg equilibrium once the same selection correspondence is fixed.

\begin{definition}[Continuous Stackelberg--ESS]
\label{def:SESS-cont}
Let $\mathcal{G}(\mathbf{m}) \subseteq \mathcal{E}(\mathbf{m})$ be a selection correspondence.
A pair $(\mathbf{m}^*, \mathbf{z}^*) \in \mathcal{M} \times \mathcal{Z}$ is a
\emph{continuous Stackelberg--ESS} if
\[
\mathbf{m}^* \in \argmax_{\mathbf{m} \in \mathcal{M}} \ \max_{\mathbf{z} \in \mathcal{G}(\mathbf{m})} Q(\mathbf{m},\mathbf{z})
\quad\text{and}\quad
\mathbf{z}^* \in \mathcal{G}(\mathbf{m}^*).
\]
\end{definition}

\begin{definition}[Optimistic continuous Stackelberg--ESS]
A pair $(\mathbf{m}^*, \mathbf{z}^*) \in \mathcal{M} \times \mathcal{Z}$ is an
\emph{optimistic continuous Stackelberg--ESS} if
\[
(\mathbf{m}^*, \mathbf{z}^*) \in \argmax_{\mathbf{m} \in \mathcal{M}, \ \mathbf{z} \in \mathcal{E}(\mathbf{m})} Q(\mathbf{m},\mathbf{z}).
\]
\end{definition}

\begin{definition}[Pessimistic continuous Stackelberg--ESS]
A pair $(\mathbf{m}^*, \mathbf{z}^*) \in \mathcal{M} \times \mathcal{Z}$ is a
\emph{pessimistic continuous Stackelberg--ESS} if
\[
\mathbf{m}^* \in \argmax_{\mathbf{m} \in \mathcal{M}} \ \min_{\mathbf{z} \in \mathcal{E}(\mathbf{m})} Q(\mathbf{m},\mathbf{z})
\text{ and }
\mathbf{z}^* \in \argmin_{\mathbf{z} \in \mathcal{E}(\mathbf{m}^*)} Q(\mathbf{m}^*,\mathbf{z}).
\]
\end{definition}

\begin{proposition}
Every continuous evolutionarily stable Stackelberg equilibrium $(m^*,\mathbf{z}^*)$ under selection correspondence $\mathcal G$ is a continuous Stackelberg equilibrium under the same selection correspondence $\mathcal G$.
\label{pr:SESS-continuous}
\end{proposition}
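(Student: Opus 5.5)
The plan is to follow the discrete argument of Proposition~\ref{pr:SESS-discrete} almost word for word. The only difference is that the inclusion $\mathcal E(\mathbf{m}) \subseteq \mathcal N(\mathbf{m})$ is now supplied by the standing assumption of this section rather than by the fact that every ESS is a symmetric Nash equilibrium. Start with a continuous SESS $(\mathbf{m}^*,\mathbf{z}^*)$ under a selection correspondence $\mathcal G$. By Definition~\ref{def:SESS-cont}, it satisfies
\[
\mathbf{m}^* \in \argmax_{\mathbf{m}\in\mathcal M}\ \max_{\mathbf{z}\in\mathcal G(\mathbf{m})} Q(\mathbf{m},\mathbf{z})
\quad\text{and}\quad
\mathbf{z}^* \in \mathcal G(\mathbf{m}^*).
\]

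The key step is to check that $\mathcal G$ is a legitimate selection correspondence for the standard continuous Stackelberg model. This means showing that $\mathcal G(\mathbf{m}) \subseteq \mathcal N(\mathbf{m})$ for every $\mathbf{m}\in\mathcal M$. By definition, $\mathcal G(\mathbf{m}) \subseteq \mathcal E(\mathbf{m})$. The standing assumption $\emptyset \neq \mathcal E(\mathbf{m}) \subseteq \mathcal N(\mathbf{m})$ then gives
\[
\mathcal G(\mathbf{m}) \subseteq \mathcal E(\mathbf{m}) \subseteq \mathcal N(\mathbf{m})
\]
for all $\mathbf{m}$. So we may take $\mathcal H = \mathcal G$ in the definition of a continuous Stackelberg equilibrium. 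With $\mathcal H = \mathcal G$, the two defining conditions of that definition are literally the two conditions displayed above. Hence $(\mathbf{m}^*,\mathbf{z}^*)$ is a continuous Stackelberg equilibrium under $\mathcal G$.

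There is essentially no obstacle here. The only point needing care is that, in the continuous setting, the inclusion $\mathcal E \subseteq \mathcal N$ does not follow from an intrinsic theorem about ESS. It must be invoked explicitly as the modeling assumption stated before Definition~\ref{def:SESS-cont}, and the proof should say so. Two further remarks are worth making. First, attainment of the inner maximum over $\mathcal G(\mathbf{m})$ is implicit in both definitions in the same way, so no compactness or continuity argument is needed. Second, nonemptiness of $\mathcal G(\mathbf{m})$ is inherited identically on both sides, so it plays no role in the comparison.
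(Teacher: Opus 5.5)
Your proposal is correct and matches the paper's proof. Both use the chain $\mathcal G(\mathbf{m}) \subseteq \mathcal E(\mathbf{m}) \subseteq \mathcal N(\mathbf{m})$ to make $\mathcal G$ a valid continuous Stackelberg selection correspondence, after which the two defining conditions coincide verbatim; you also spell out that the middle inclusion is the standing modeling assumption rather than an ESS-implies-Nash fact, which the paper leaves implicit.
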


\begin{proof}
Suppose that $(\mathbf{m}^*,\mathbf{z}^*)$ is a continuous evolutionarily stable Stackelberg equilibrium under selection correspondence $\mathcal G.$ By Definition 7,
\[
\mathbf{m}^* \in \argmax_{\mathbf{m}\in\mathcal M}
\;\max_{\mathbf{z}\in\mathcal G(\mathbf{m})}
Q(\mathbf{m},\mathbf{z})
\]
and
\[
\mathbf{z}^* \in \mathcal G(\mathbf{m}^*).
\]
Since
\[
\mathcal G(\mathbf{m}) \subseteq \mathcal E(\mathbf{m}) \subseteq \mathcal N(\mathbf{m})
\]
for every $\mathbf{m},$ $\mathcal G$ is also a valid continuous Stackelberg selection correspondence. The two conditions above are exactly the conditions for a continuous Stackelberg equilibrium under selection correspondence $\mathcal G.$
\end{proof}

We now present the Stackelberg evolutionary game formulation of cancer treatment from prior work~\cite{Kleshnina23:Game}. In this formulation (Equation~\ref{eq:stack}), the objective Q corresponds to the quality of life function that is maximized by the leader. For each cell type $i$, there is a fitness function $G_i(u_i,\mathbf{m},\mathbf{x})$ that the follower is trying to maximize. We assume that the dynamics of the population $\mathbf{x}$ are governed by $\dot{\mathbf{x}} = \mathbf{G}(t) \mathbf{x}.$ In order to ensure that we are in equilibrium of the ecological dynamics we must have that $\dot{x_i} = 0$ for all $i.$ Note that in this formulation each follower phenotype with positive population is playing a best response by maximizing its fitness function. The best-response condition is imposed only for phenotypes with positive population. By contrast, in Stackelberg–ESS we assume that the behavior of the cancer cells is determined by evolutionary stability conditions rather than explicit optimization. An algorithm for solving this problem has been presented that is based on a nonconvex quadratic program formulation~\cite{Ganzfried24:Computing}. This algorithm has been demonstrated to quickly compute a global optimal solution on a realistic example problem proposed by prior work~\cite{Kleshnina23:Game}.

\begin{equation}
\label{eq:stack}
\begin{aligned}
\max_{\mathbf{m}^*,\mathbf{u}^*,\mathbf{x}^*} \quad & Q(\mathbf{m}^*,\mathbf{u}^*,\mathbf{x}^*) \\
\text{s.t.}\quad
& \dot x_i^* = 0, \quad i = 1,\ldots,n, \\
& u_i^* \in \arg\max_{u_i \in [0,1]} G_i(u_i, \mathbf{m}^*, \mathbf{x}^*),
\quad i \text{ such that } x_i^* > 0, \\
& \boldsymbol{m}^\ast \in \mathcal M, \boldsymbol{x}^\ast \ge 0, 0 \le \boldsymbol{u}^\ast \le 1.
\end{aligned}
\end{equation}

The formulation for optimistic continuous Stackelberg-ESS is given by Equation~\ref{eq:continuous-osess}. This formulation is similar to the previous one for standard Stackelberg equilibrium, except that instead of imposing only best-response conditions for the fitness functions, the follower outcome is required to satisfy eco-evolutionary stability given the leader strategy $m^*.$ We will focus our attention on the problem of computing OSESS under this formulation, since it is perhaps the easiest case of SESS to solve since both players are aligned in maximizing $Q.$ 

\begin{equation}
\label{eq:continuous-osess}
\begin{aligned}
\max_{\mathbf{m}^*,\mathbf{u}^*,\mathbf{x}^*} \quad & Q(\mathbf{m}^*,\mathbf{u}^*,\mathbf{x}^*) \\
\text{s.t.}\quad
& \dot x_i^* = 0, \quad i = 1,\ldots,n, \\
& (\mathbf{x}^*,\mathbf{u}^*) \in \mathcal{E}(\mathbf{m}^*), \\
& \boldsymbol{m}^\ast \in \mathcal M, \boldsymbol{x}^\ast \ge 0, 0 \le \boldsymbol{u}^\ast \le 1.
\end{aligned}
\end{equation}

Equation~\ref{eq:continuous-osess-reform} provides an equivalent reformulation of Equation~\ref{eq:continuous-osess} under the eco-evolutionary stability notion adopted here. Specifically, we require the resident state to be an ecological equilibrium and require that no sufficiently rare mutant with an alternative trait value have positive growth. For the continuous-trait setting, we formulate evolutionary stability in terms of resistance to invasion by rare mutants with alternative trait values~\cite{Hofbauer98:Evolutionary,Sandholm10:Population,Geritz98:Evolutionarily}, rather than using the discrete ESS definition presented previously. Given fixed $\mathbf{m},$ we define a resident state
$(\mathbf{x}^*,\mathbf{u}^*)$ to be eco--evolutionarily stable if
$\mathbf{x}^*$ is an ecological equilibrium
(i.e., $\dot{\mathbf{x}}^*=\mathbf{0}$) and no rare mutant with an alternative trait value can grow, i.e.,
\[
\sup_{u_i \in [0,1]} G_i(u_i,\mathbf{m},\mathbf{x}^*) \leq 0
\quad \forall i.
\]
Since $\dot{x}=Gx,$ the ecological equilibrium constraint $\dot{x}^{*}_i=0$ is equivalent to
\[
x_i^*G_i(u_i^*,m^*,x^*)=0.
\]
The no-invasion condition can be written as
\[
G_i(u_i,m^*,x^*)\le 0
\quad \forall u_i\in[0,1],\quad i=1,\ldots,n.
\]
Substituting these two conditions into Equation~\ref{eq:continuous-osess} gives the reformulation in Equation~\ref{eq:continuous-osess-reform}.

\begin{equation}
\label{eq:continuous-osess-reform}
\begin{aligned}
\max_{\mathbf{m}^*,\mathbf{u}^*,\mathbf{x}^*} \quad & Q(\mathbf{m}^*,\mathbf{u}^*,\mathbf{x}^*) \\
\text{s.t.}\quad
& x_i^* \, G_i(u_i^*, \mathbf{m}^*, \mathbf{x}^*) = 0, \quad i = 1,\dots,n, \\
& G_i(u_i, \mathbf{m}^*, \mathbf{x}^*) \le 0 
\quad \forall u_i \in [0,1],\; i = 1,\dots,n, \\
& \boldsymbol{m}^\ast \in \mathcal M, \boldsymbol{x}^\ast \ge 0, 0 \le \boldsymbol{u}^\ast \le 1.
\end{aligned}
\end{equation}

\section{Algorithms}
\label{se:alg}

We first present an algorithm for computing OSESS in normal-form games, followed by an algorithm for continuous-trait games. Pseudocode for the main algorithm in the normal-form setting is given in Algorithm~\ref{alg:discrete-osess}. The algorithm follows a generate-and-certify structure. For each candidate follower support $T,$ it repeatedly solves a support-restricted optimization problem using the subroutine in Algorithm~\ref{alg:discrete-stack}. The solution $(\boldsymbol{\sigma},\boldsymbol{x})$ is then checked for evolutionary stability using Algorithm~\ref{alg:ess-check}. If the ESS test succeeds, then the candidate is certified for that support. If the ESS test fails, Algorithm~\ref{alg:ess-check} returns a violating mutant $\boldsymbol{y},$ which is added to the cut set $\mathcal C$ and used to refine the next support-restricted optimization. The algorithm returns the certified candidate with highest leader payoff $\boldsymbol{\sigma}^{\top}\mathbf{A_L}\boldsymbol{x}$ across all supports. The values used for all numerical tolerance parameters are given in Table~\ref{tab:tols}.

\begin{algorithm}[!ht]
\caption{\textsc{ComputeDiscreteOSESS}}
\label{alg:discrete-osess}
\begin{algorithmic}[1]
\REQUIRE Follower payoff tensor $\mathbf{A_F} \in \mathbb{R}^{m\times n\times n}$
\REQUIRE Leader payoff matrix $\mathbf{A_L} \in \mathbb{R}^{m\times n}$
\REQUIRE Minimum support mass $\epsilon_s$, ESS tolerances $\epsilon_p,\delta$
\ENSURE OSESS $(\boldsymbol\sigma^\star,\mathbf{x}^\star)$ if one is certified
\STATE $v^\star \leftarrow -\infty$
\STATE $(\boldsymbol\sigma^\star,\mathbf{x}^\star) \leftarrow (\emptyset,\emptyset)$
\FOR{all nonempty supports $T \subseteq \mathcal P$}
    \STATE $\mathcal C \leftarrow \emptyset$
    \REPEAT
        \STATE $(\texttt{status},\boldsymbol\sigma,\mathbf{x}) \leftarrow \textsc{SolveSESupport}(\mathbf{A_F},\mathbf{A_L},T,\epsilon_s,\mathcal C)$
        \IF{$\texttt{status}=\texttt{INFEASIBLE}$}
            \STATE \textbf{break}
        \ENDIF
        \STATE $(\texttt{pass},\mathbf{y}) \leftarrow \textsc{IsESS}(\mathbf{A_F},\boldsymbol\sigma,\mathbf{x},\epsilon_p,\delta)$
        \IF{$\texttt{pass}=\texttt{TRUE}$}
            \STATE $v \leftarrow \boldsymbol\sigma^\top \mathbf{A_L}\mathbf{x}$
            \IF{$v>v^\star$}
                \STATE $v^\star \leftarrow v$
                \STATE $(\boldsymbol\sigma^\star,\mathbf{x}^\star) \leftarrow (\boldsymbol\sigma,\mathbf{x})$
            \ENDIF
            \STATE \textbf{break}
        \ELSE
            \STATE $\mathcal C \leftarrow \mathcal C \cup \{\mathbf{y}\}$
        \ENDIF
    \UNTIL{\texttt{FALSE}}
\ENDFOR
\STATE \textbf{return} $(\boldsymbol\sigma^\star,\mathbf{x}^\star)$
\end{algorithmic}
\end{algorithm}

\begin{algorithm}[!ht]
\caption{\textsc{SolveSESupport}}
\label{alg:discrete-stack}
\begin{algorithmic}[1]
\REQUIRE Follower payoff tensor $\mathbf{A_F} \in \mathbb{R}^{m\times n\times n}$
\REQUIRE Leader payoff matrix $\mathbf{A_L} \in \mathbb{R}^{m\times n}$
\REQUIRE Support $T \subseteq \mathcal P$, minimum support mass $\epsilon_s$
\REQUIRE Set $\mathcal C$ of generated mutant cuts
\ENSURE $(\texttt{status},\boldsymbol{\sigma},\boldsymbol{x})$

\STATE Solve the following optimization problem:
\[
\begin{aligned}
\max_{\boldsymbol{\sigma},\boldsymbol{x},v} \quad
    & \boldsymbol{\sigma}^{\top}\mathbf{A_L}\boldsymbol{x} \\
\text{s.t.} \quad
    & \sigma_{\ell} \ge 0, \quad \forall \ell \in \mathcal M, \\
    & \sum_{\ell \in \mathcal M}\sigma_{\ell} = 1, \\
    & x_i \ge \epsilon_s, \quad \forall i \in T, \\
    & x_j = 0, \quad \forall j \notin T, \\
    & \sum_{i \in \mathcal P}x_i = 1, \\
    & \sum_{\ell \in \mathcal M}\sum_{k \in \mathcal P}
      \sigma_{\ell}\mathbf{A_F}(\ell,i,k)x_k = v,
      \quad \forall i \in T, \\
    & \sum_{\ell \in \mathcal M}\sum_{k \in \mathcal P}
      \sigma_{\ell}\mathbf{A_F}(\ell,j,k)x_k \le v,
      \quad \forall j \notin T, \\
    & \textsc{MutantCut}(\boldsymbol{y};\boldsymbol{\sigma},\boldsymbol{x},v) \text{ holds},
      \quad \forall \boldsymbol{y}\in\mathcal C.
\end{aligned}
\]

\IF{the problem is feasible}
    \STATE \textbf{return} $(\texttt{FEASIBLE},\boldsymbol{\sigma},\boldsymbol{x})$
\ELSE
    \STATE \textbf{return} $(\texttt{INFEASIBLE},\varnothing,\varnothing)$
\ENDIF
\end{algorithmic}
\end{algorithm}

\begin{algorithm}[!ht]
\caption{\textsc{IsESS}}
\label{alg:ess-check}
\begin{algorithmic}[1]
\REQUIRE Follower payoff tensor $\mathbf{A}_F \in \mathbb{R}^{m \times n \times n}$
\REQUIRE Leader mixed strategy $\boldsymbol{\sigma} \in \Delta(M)$
\REQUIRE Follower mixed strategy $\boldsymbol{x} \in \Delta(P)$
\REQUIRE Tolerances $\epsilon_p,\delta$
\ENSURE{(\texttt{TRUE},$\varnothing$) if $\mathbf{x}$ passes the numerical ESS test; otherwise (\texttt{FALSE},$\mathbf{y}$), where $\mathbf{y}$ is a violating mutant}

\STATE Define the induced follower payoff matrix $\mathbf{B} \in \mathbb{R}^{n \times n}$:
\[
\mathbf{B}_{ij} \leftarrow \sum_{\ell \in M}\boldsymbol{\sigma}_{\ell}\mathbf{A}_F(\ell,i,j),
\qquad \forall i,j \in P.
\]

\STATE $v \leftarrow \boldsymbol{x}^{\top}\mathbf{B}\boldsymbol{x}$

\STATE Solve the following optimization problem:
\[
\begin{aligned}
\max_{\boldsymbol{y}} \quad
    & \boldsymbol{y}^{\top}\mathbf{B}\boldsymbol{y} - \boldsymbol{x}^{\top}\mathbf{B}\boldsymbol{y} \\
\text{s.t.} \quad
    & \boldsymbol{y}_j \ge 0, \quad \forall j \in P, \\
    & \sum_{j \in P}\boldsymbol{y}_j = 1, \\
    & \left| \boldsymbol{y}^{\top}\mathbf{B}\boldsymbol{x} - v \right| \le \epsilon_p, \\
    & \|\boldsymbol{y} - \boldsymbol{x}\|^2_2 \ge \delta .
\end{aligned}
\]

\IF{the optimal objective value $\le \epsilon_p$}
    \STATE \textbf{return} $(\texttt{TRUE},\varnothing)$
\ELSE
    \STATE \textbf{return} $(\texttt{FALSE},\boldsymbol{y})$
\ENDIF
\end{algorithmic}
\end{algorithm}

Algorithm~\ref{alg:discrete-stack} solves the support-restricted master problem for a fixed follower support $T$ and a set $\mathcal C$ of generated mutant cuts. The constraints $x_i \ge \epsilon_s$ ensure that the pure strategies in the support are played with positive probability. The objective maximizes the leader's payoff, while the follower payoff constraints ensure that $\boldsymbol{x}$ is a best response to itself in the follower game induced by $\boldsymbol{\sigma},$ restricted to the specified support. When $\mathcal C$ is nonempty, the additional mutant cuts rule out previously identified violations of the ESS conditions. Algorithm~\ref{alg:ess-check} then globally searches for a mutant strategy $\boldsymbol{y}$ that can successfully invade against a population following $\boldsymbol{x}$ in the game induced by $\boldsymbol{\sigma}.$ The parameter $\epsilon_p$ allows for a small amount of numerical imprecision, and the final constraint in Algorithm~\ref{alg:ess-check} ensures that $\boldsymbol{y}$ is numerically distinct from $\boldsymbol{x}$ by using numerical separation parameter $\delta.$

For a generated mutant $\boldsymbol{y}\in\mathcal C,$ the constraint
$\textsc{MutantCut}(\boldsymbol{y};\boldsymbol{\sigma},\boldsymbol{x},v)$ requires that $\boldsymbol{y}$ no longer constitute a violating mutant for the candidate $(\boldsymbol{\sigma},\boldsymbol{x}).$ Let
\[
\mathbf{B}^{\boldsymbol{\sigma}}
=
\sum_{\ell\in\mathcal M}\sigma_\ell \mathbf{A_F}(\ell,\cdot,\cdot).
\]
Since the support constraints imply $\boldsymbol{x}^{\top}\mathbf{B}^{\boldsymbol{\sigma}}\boldsymbol{x}=v,$ the cut requires that either $\boldsymbol{y}$ is not an approximate best response against $\boldsymbol{x},$
\[
\boldsymbol{y}^{\top}\mathbf{B}^{\boldsymbol{\sigma}}\boldsymbol{x}
\le v-\epsilon_p,
\]
or, if $\boldsymbol{y}$ is an approximate best response, it does not pass the second-order ESS violation test,
\[
\boldsymbol{y}^{\top}\mathbf{B}^{\boldsymbol{\sigma}}\boldsymbol{y}
-
\boldsymbol{x}^{\top}\mathbf{B}^{\boldsymbol{\sigma}}\boldsymbol{y}
\le \epsilon_p.
\]
We encode this disjunction using a binary variable $b_{\boldsymbol{y}}\in\{0,1\}$ and a sufficiently large constant $M_{\mathrm{cut}}$:
\[
\boldsymbol{y}^{\top}\mathbf{B}^{\boldsymbol{\sigma}}\boldsymbol{x}
\le v-\epsilon_p + M_{\mathrm{cut}}b_{\boldsymbol{y}},
\]
\[
\boldsymbol{y}^{\top}\mathbf{B}^{\boldsymbol{\sigma}}\boldsymbol{y}
-
\boldsymbol{x}^{\top}\mathbf{B}^{\boldsymbol{\sigma}}\boldsymbol{y}
\le \epsilon_p + M_{\mathrm{cut}}(1-b_{\boldsymbol{y}}).
\]

If the support-restricted master problems are solved to global optimality, the cut-generation loop terminates for each support, and the algorithm returns a nonempty candidate, then the returned candidate is an OSESS with respect to the numerical ESS test in Algorithm~\ref{alg:ess-check}. Indeed, for a fixed support, the master problem with finitely many generated cuts is a relaxation of the true ESS-constrained problem. Therefore, once its optimizer $(\boldsymbol{\sigma},\boldsymbol{x})$ passes the global ESS certification check, no higher-objective ESS-feasible candidate exists on that support. Taking the best certified candidate over all supports then yields an OSESS under the stated numerical tolerances.

Pseudocode for the main algorithm in continuous setting is given in Algorithm~\ref{alg:continuous-osess}. The algorithm follows a generate-and-certify structure: first a candidate solution satisfying the KKT conditions of the invasion maximization problems is computed, and then a global optimization check verifies that no mutant trait can invade. The main subroutine Algorithm~\ref{alg:gen-osess} generates a candidate OSESS by solving a finite-dimensional KKT-based relaxation of the reformulation in Equation~\ref{eq:continuous-osess-reform}. The formulation globally optimizes the leader objective over solutions satisfying the KKT necessary optimality conditions for the invasion maximization problems, but such a solution is not guaranteed to satisfy the true global invasion constraints. We therefore perform the ex-post certification procedure described in Algorithm 6, using a numerical tolerance of $\epsilon_{\mathrm{inv}},$ to verify that no mutant trait can invade. We set $\epsilon_{inv} = 10^{-3},$ which is biologically negligible but significantly larger than Gurobi's default tolerance of $10^{-6}.$ Note that if the fitness functions $G_i$ are concave in $u_i$ then the KKT conditions are also sufficient. In our experiments we will use previously-studied $Q$ and $G_i$ functions, and the algorithm will involve solving nonconvex QCQPs.

\begin{algorithm}[!ht]
\caption{\textsc{ComputeContinuousOSESS}}
\label{alg:continuous-osess}
\begin{algorithmic}[1]
\REQUIRE Leader objective $Q(\mathbf{m},\mathbf{u},\mathbf{x})$
\REQUIRE Fitness functions $\{G_i(u_i,\mathbf{m},\mathbf{x})\}_{i=1}^n$
\REQUIRE Feasible set $\mathcal{M}$, tolerance $\epsilon_{\mathrm{inv}}$
\ENSURE OSESS $(\mathbf{m}^\star,\mathbf{u}^\star,\mathbf{x}^\star)$ if one is found

\STATE $(\mathbf{m}^\star,\mathbf{u}^\star,\mathbf{x}^\star)
    \leftarrow \textsc{GenOSESS}(Q,\{G_i\},\mathcal{M})$

\FOR{$i = 1$ to $n$}
    \STATE $G_i^{\max} \leftarrow \textsc{Certify}(i,\mathbf{m}^\star,\mathbf{x}^\star)$
\ENDFOR

\IF{$\max_i G_i^{\max} \le \epsilon_{\text{inv}}$}
    \STATE \textbf{return} $(\mathbf{m}^\star,\mathbf{u}^\star,\mathbf{x}^\star)$
\ELSE
    \STATE \textbf{return} \texttt{FAILURE}
\ENDIF

\end{algorithmic}
\end{algorithm}

\begin{algorithm}[!ht]
\caption{\textsc{GenOSESS}}
\label{alg:gen-osess}
\begin{algorithmic}[1]
\REQUIRE Leader objective $Q(\mathbf{m},\mathbf{u},\mathbf{x})$
\REQUIRE Fitness functions $\{G_i(u_i,\mathbf{m},\mathbf{x})\}_{i=1}^n$
\REQUIRE Feasible set $\mathcal{M}$
\ENSURE Candidate solution $(\mathbf{m}^\star,\mathbf{u}^\star,\mathbf{x}^\star)$
\STATE Solve the following optimization problem:
\small
\[
\begin{aligned}
\max_{\mathbf{m},\mathbf{u},\mathbf{x},\mathbf{u}',\boldsymbol{\lambda}^L,\boldsymbol{\lambda}^U}
\quad
& Q(\mathbf{m},\mathbf{u},\mathbf{x}) \\
\text{s.t.}\quad
& x_i G_i(u_i,\mathbf{m},\mathbf{x}) = 0, \quad i=1,\ldots,n, \\
& \mathbf{m} \in \mathcal{M},\ \mathbf{x} \ge 0,\
0 \le \mathbf{u} \le 1,\ 0 \le \mathbf{u}' \le 1, \\
& G_i(u'_i,\mathbf{m},\mathbf{x}) \le 0, \quad i=1,\ldots,n, \\
& \frac{\partial G_i}{\partial u_i}(u'_i,\mathbf{m},\mathbf{x})
+ \lambda_i^L - \lambda_i^U = 0,
\ i=1,\ldots,n, \\
& \lambda_i^L u'_i = 0,
\quad i=1,\ldots,n, \\
& \lambda_i^U (1-u'_i) = 0,
\quad i=1,\ldots,n, \\
& \lambda_i^L \ge 0,\;
\lambda_i^U \ge 0,
\quad i=1,\ldots,n .
\end{aligned}
\]
\normalsize

\STATE \textbf{return} optimal solution
$(\mathbf{m}^\star,\mathbf{u}^\star,\mathbf{x}^\star)$

\end{algorithmic}
\end{algorithm}

\begin{algorithm}[!ht]
\caption{\textsc{Certify}}
\label{alg:certify}
\begin{algorithmic}[1]
\REQUIRE Phenotype index $i \in \{1,\ldots,n\}$
\REQUIRE Fixed $(\mathbf{m}^\star,\mathbf{x}^\star)$
\ENSURE $G_i^{\max}$

\STATE Compute globally optimal solution:

\[
\begin{aligned}
\max_{u_i}
\quad & G_i(u_i,\mathbf{m}^\star,\mathbf{x}^\star) \\
\text{s.t.}\quad
& 0 \le u_i \le 1
\end{aligned}
\]

\STATE \textbf{return} optimal objective $G_i^{\max}$

\end{algorithmic}
\end{algorithm}

\begin{table}[!ht]
\centering
\caption{Tolerance parameter values for the algorithms.}
\label{tab:tols}
\begin{tabular}{|*{2}{c|}} \hline
Parameter &Value\\ \hline
$\epsilon_s$ & $10^{-4}$\\  \hline 
$\epsilon_p$ & $10^{-5}$ \\ \hline
$\delta$    & $10^{-2}$ \\ \hline
$\epsilon_{\mathrm{inv}}$ & $10^{-3}$ \\ \hline
\end{tabular}
\end{table}

\section{Experiments}
\label{se:exp}
We experiment with our continuous algorithm on an evolutionary cancer game model previously studied~\cite{Kleshnina23:Game}. Note that prior work for this model has computed a Stackelberg equilibrium (SE),
where the leader commits to a strategy that optimizes the quality of life function $Q$ while the cancer cell phenotypes optimize their respective fitness functions $G_i$ (subject to the equilibrium conditions of the ecological dynamics being satisfied)~\cite{Kleshnina23:Game,Ganzfried24:Computing}. In contrast, we compute an OSESS in this game, which requires that the follower outcome satisfy eco-evolutionary stability and that no mutant with an alternative trait value can have positive growth (while SE does not preclude existence of mutants with positive growth). This imposes a stronger evolutionary-stability requirement on the follower outcome than standard SE, so the resulting leader and follower outcomes may differ.  The model is instantiated by the following functional forms for the fitness functions $G_i$ and quality of life function $Q$. Note that the model presentation is slightly different between the paper~\cite{Kleshnina23:Game} and the implementation in the code repository~\cite{Kleshnina23:SEG}, and we use the model from the code. In this model, only cell types 1 and 2 have evolvable resistance traits, while cell type 0 has no associated trait variable; therefore the no-invasion certification is applied to the mutant trait variables \(u_1\) and \(u_2\).

\footnotesize
$$G_0 = r_{\mathrm{max}} \left(1 - \frac{\alpha_{00}x_0 + \alpha_{01}x_1 + \alpha_{02}x_2}{K}\right) - d - \frac{m_1}{k_1} - \frac{m_2}{k_2}$$
$$G_1 = r_{\mathrm{max}} e^{-g_1 u_1} \left(1 - \frac{\alpha_{10}x_0 + \alpha_{11}x_1 + \alpha_{12}x_2}{K}\right) - d - \frac{m_1}{b_1 u_1 + k_1} - \frac{m_2}{k_2}$$
$$G_2 = r_{\mathrm{max}} e^{-g_2 u_2} \left(1 - \frac{\alpha_{20}x_0 + \alpha_{21}x_1 + \alpha_{22}x_2}{K}\right) - d -  \frac{m_1}{k_1} - \frac{m_2}{b_2 u_2 + k_2}$$
$$Q = Q^{\mathrm{max}} - c \left(\frac{x_0 + x_1 + x_2}{K}\right)^2 - w_1 m^2_1 - w_2 m^2_2 - r_1 u^2_1 - r_2 u^2_2$$
\normalsize

The model has several parameters, whose interpretations are summarized in Table~\ref{ta:params}.
Note that in the code additional parameters $a_0, a_1, a_2, a_3$ are defined, with
\[
  \begin{bmatrix}
    \alpha_{00} &\alpha_{01} &\alpha_{02} \\
		\alpha_{10} &\alpha_{11} &\alpha_{12} \\
		\alpha_{20} &\alpha_{21} &\alpha_{22} \\
  \end{bmatrix} 
	= 
	 \begin{bmatrix}
    a_0 &a_1 &a_1 \\
		a_2 &a_0 &a_3 \\
		a_2 &a_3 &a_0 \\
  \end{bmatrix} 
\]

\begin{table}[!ht]
\centering
\caption{Interpretations of model parameters}
\label{ta:params}
\small
\begin{tabular}{|*{2}{c|}} \hline
Parameter &Interpretation\\ \hline
$r_{\mathrm{max}}$ &Max cell growth rate \\ 
$g_i$ &Cost of resistance strategy (cell type) $i$\\ 
$\alpha_{ij}$ &Interaction coefficient between cell types $i$ and $j$ \\ 
$K$ &Carrying capacity \\ 
$d$ &Natural death rate \\ 
$k_i$ &\makecell[c]{Innate resistance that may be\\ present before drug exposure} \\ 
$b_i$ &\makecell[c]{Benefit of the evolved resistance\\ 
	trait in reducing therapy efficacy} \\ 
$Q^{\mathrm{max}}$ &Quality of life of a healthy patient \\
$w_i$ &Toxicity of drug $i$ \\
$r_i$ &Effect of resistance rate of cell type $i$ \\
$c$ &\makecell[c]{Weight for impact of tumor burden\\
vs.\ drug toxicity/treatment-induced resistance rate} \\ \hline
\end{tabular}
\end{table}
\normalsize

We ran experiments comparing the algorithm for computing a Stackelberg equilibrium~\cite{Ganzfried24:Computing} to our new algorithm for computing OSESS on a problem instance using the same parameter values that have been previously used~\cite{Kleshnina23:SEG}, which are provided in Table~\ref{ta:param-values}. Both approaches involve solving QCQPs using Gurobi's nonconvex mixed-integer quadratically-constrained programming (MIQCP) solver version 11.0.3~\cite{Gurobi26:Gurobi}. Gurobi's nonconvex MIQCP solver guarantees global optimality (up to a numerical tolerance). We used Gurobi's default numerical tolerance value of $10^{-6}.$ We used Gurobi's default values for all settings other than DualReductions which we set to 0 since we observed improved performance by disabling aggressive presolve reductions. We used an Intel Core i7-1065G7 processor with 8 threads and 16 GB of RAM.

\begin{table}[!ht]
\centering
\caption{Parameter values used in experiments}
\label{ta:param-values}
\begin{tabular}{|*{2}{c|}} \hline
Parameter &Value\\ \hline
$r_{\mathrm{max}}$ &0.45 \\ 
$g_1$ &0.5\\ 
$g_2$ &0.5\\ 
$a_0$ &1 \\
$a_1$ &0.15 \\
$a_2$ &0.9 \\
$a_3$ &0.9 \\
$K$ & 10,000 \\ 
$d$ & 0.01 \\ 
$k_1$ &5 \\ 
$k_2$ &5 \\ 
$b_1$ &10 \\ 
$b_2$ &10 \\ 
$Q^{\mathrm{max}}$ &1 \\
$w_1$ &0.5 \\
$w_2$ &0.2 \\
$r_1$ &0.4 \\
$r_2$ &0.4 \\
$c$ &0.5 \\ \hline
\end{tabular}
\end{table}

\begin{table}[!ht]
\centering
\caption{Experimental results for evolutionary cancer game}
\label{ta:results}
\begin{tabular}{|*{3}{c|}} \hline
&Stackelberg Equilibrium &OSESS\\ \hline
$m^*_1$ &0.4105 &0.4003\\ \hline
$m^*_2$ &0.4680 &0.4571\\ \hline
$u^*_1$ &0.2139 &0.1827\\ \hline
$u^*_2$ &0.2856 &0.2828\\ \hline
$x^*_0$ &5731.0481 &5823.7239\\ \hline
$x^*_1$ &0.0087 &9.5179\\ \hline
$x^*_2$ &950.7623 &946.4278\\ \hline
$Q^*$ &0.5978 &0.6029\\ \hline
Runtime (seconds) &2.2070 &134.6910\\ \hline
\end{tabular}
\end{table}

The main results from our experiments are depicted in Table~\ref{ta:results}. Note that the optimal physician strategies ($\mathbf{m}^*$) vary slightly between the two solutions, and the optimal quality of life $Q^*$ is slightly higher for the OSESS solution than the SE solution. The SE algorithm ran significantly faster than the OSESS algorithm (around 2 seconds versus 2.2 minutes). Both algorithms are based on KKT necessary optimality conditions and require the application of \emph{ex-post} certification procedures to ensure global optimality. For both algorithms we recompute the values of $Q$ and $G_i$ exactly by calculating them directly from the solutions output from Gurobi (since the values of $Q$ and $G_i$ in the optimizations themselves contain numerical error). For the OSESS algorithm the certification procedure obtains $G^{max}_1 = \num{7.85e-5}$, $G^{max}_2 = \num{1.41e-4},$ which are both significantly below $\epsilon_{inv} = 10^{-3}$, so we conclude that our solution is in fact an OSESS. As it turns out, running the same certification procedure on the Stackelberg equilibrium solution obtains $G^{max}_1 = \num{8.00e-5}$, $G^{max}_2 = \num{1.42e-4},$ which are also both significantly below $\epsilon_{inv} = 10^{-3}$. So for these particular game parameters it turns out that the SE solution obtained is also an OSESS, though this will not be the case in general.

\section{Conclusion}
\label{se:conc}
Stackelberg evolutionary games model biological interactions such as cancer therapy, fishery management, and pest control~\cite{Kleshnina23:Game}. In these games, a rational human leader first selects a strategy, and evolutionary followers then respond conditional on the leader's strategy, possibly subject to ecological population equilibrium constraints. Prior work has assumed that the follower players act to maximize fitness functions $G_i$. In the resulting Stackelberg equilibrium, the leader and followers are simultaneously maximizing their respective objective functions. However, this solution does not guarantee that the follower strategy is resistant to invasion by rare mutants. In the continuous-trait setting, evolutionary stability requires that every rare mutant with an alternative trait value have nonpositive growth rate, up to numerical tolerance. We introduced the new solution concept evolutionarily stable Stackelberg equilibrium (SESS), in which the leader is a rational utility maximizer while the follower response is required to satisfy evolutionary stability and be resistant to invasion. In an SESS, the leader commits to a strategy that maximizes their objective assuming the follower response will satisfy evolutionary stability in the problem induced by the leader's strategy. This imposes a stricter stability requirement on the follower strategy than standard Stackelberg equilibrium, which only requires best-response behavior without ensuring evolutionary stability.

We define SESS for both discrete normal-form games and for continuous-trait games, which arise naturally in biological and control applications such as cancer treatment. In our general formulation, the follower response is selected from the set of evolutionarily stable outcomes according to a selection correspondence. We also consider the special cases where the followers play the ESS that is most beneficial to the leader (OSESS) and the ESS that is worst for the leader (PSESS). OSESS is perhaps the easiest case of SESS to compute since in a sense all players are aligned in maximizing the leader's payoff (subject to the followers playing an ESS), which simplifies the optimization to a single outer maximization problem. We present algorithms for computing OSESS in both discrete and continuous settings. If we view the follower strategies as population frequencies, then our model is applicable to settings with many follower players; e.g., there are a large number of cancer cells where each cell has one of a small number of possible phenotypes.

We implemented our OSESS algorithm on an evolutionary cancer game model that has been previously studied and demonstrated that its runtime is reasonable in practice. It is not surprising that OSESS computation takes longer than SE computation, since OSESS imposes additional evolutionary-stability constraints on the follower response and therefore involves solving a more challenging optimization problem. As it turns out, the Stackelberg equilibrium solution for this game ended up being an OSESS as well, though this is not guaranteed in general. This suggests that for larger games perhaps a good approach would be to first compute an SE, which is more tractable than SESS, and test whether it also satisfies evolutionary stability; if not, then we can continue on to the more computationally expensive OSESS computation.

\bibliographystyle{plain}
\bibliography{C://FromBackup/Research/refs/dairefs}

\end{document}